\theoremstyle{definition}
\newtheorem{theorem}{\normalfont\bfseries Theorem}
\newtheorem{definition}{\normalfont\bfseries Definition}
\newtheorem{remark}{\normalfont\bfseries Remark}
\newtheorem{example}{\normalfont\bfseries Example}
\newcommand{\R}{\mathbb{R}}
\newcommand{\X}{\mathbb{R}^n}
\newcommand{\U}{\mathbb{R}^m}
\newcommand{\C}{\mathcal{C}}
\newcommand{\K}{\mathcal{K}}
\newcommand{\Ke}{\K^{\rm e}}
\newcommand{\derp}[2]{\frac{\partial #1}{\partial #2}}
\renewcommand{\H}{\mathcal{H}}
\renewcommand{\P}{\mathcal{P}}
\newcommand{\bzero}{\mathbf{0}}
\newcommand{\bx}{\mathbf{x}}
\newcommand{\bu}{\mathbf{u}}
\newcommand{\bff}{\mathbf{f}}
\newcommand{\bg}{\mathbf{g}}
\newcommand{\bk}{\mathbf{k}}
\newcommand{\bn}{\mathbf{n}}
\newcommand{\bp}{\mathbf{p}}
\newcommand{\bw}{\mathbf{w}}
\newcommand{\bq}{\mathbf{q}}
\newcommand{\bv}{\mathbf{v}}
\newcommand{\bD}{\mathbf{D}}
\newcommand{\bC}{\mathbf{C}}
\newcommand{\bG}{\mathbf{G}}
\newcommand{\bB}{\mathbf{B}}
\title{\LARGE \bf
Navigating Polytopes with Safety: A Control Barrier Function Approach
% \!Navigating Polytopes: Control Barrier Functions for Collision-free Safety
% Navigating Polytopes: Safety-critical Control by Control Barrier Functions
% Safety-critical Navigation of Polytopes with Control Barrier Functions
% Safe Navigation of Polytope Agents in Polytope Environments \\ Using Control Barrier Functions
}
\author{Tamas G. Molnar%
\thanks{*The material contained in this document is based upon work supported by a National Aeronautics and Space Administration (NASA) grant or cooperative agreement. Any opinions, findings, and conclusions or recommendations expressed in this material are those of the author(s) and do not necessarily reflect the views of NASA. This work was supported through a NASA grant awarded to the Kansas NASA EPSCoR Program.}%
\thanks{Department of Mechanical Engineering, Wichita State University, Wichita, KS 67260, USA,
{\tt\small tamas.molnar@wichita.edu}.}%
% \vspace{-1mm}
}
\begin{document}

\maketitle
\thispagestyle{empty}
\pagestyle{empty}

%%%%%%%%%%%%%%%%%%%%%%%%%%%%%%%%%%%%%%%%%%%%%%%%%%%%%%%%%%%%%%%%%%%%%%%%%%%%%%%%
\begin{abstract}
Collision-free motion is a fundamental requirement for many autonomous systems.
This paper develops a safety-critical control approach for the collision-free navigation of polytope-shaped agents in polytope-shaped environments.
A systematic method is proposed to generate control barrier function candidates in closed form that lead to controllers with formal safety guarantees.
The proposed approach is demonstrated through simulation, with obstacle avoidance examples in 2D and 3D, including dynamically changing environments.
\end{abstract}
% \vspace{-1mm}

%%%%%%%%%%%%%%%%%%%%%%%%%%%%%%%%%%%%%%%%%%%%%%%%%%%%%%%%%%%%%%%%%%%%%%%%%%%%%%%%
\section{INTRODUCTION}
\label{sec:intro}

Collision-free navigation of controlled agents, such as robotic systems or intelligent vehicles, is crucial for safe autonomous behavior.
While several motion planning approaches exist to tackle this problem, recent research also focuses on reactive safety-critical control, primarily to address dynamically changing or unknown environments where planning is difficult.
In this domain, control barrier function (CBF)~\cite{AmesXuGriTab2017, garg2024advances} has become a dominant tool for safe control design.
In this paper, we focus on CBFs for safe navigation.

A variety of CBF-based approaches exists for safe navigation of various geometries.
For example, \cite{hamatani2020collision} controlled planar robot arms to avoid points, while
\cite{almubarak2022safety}
% planned and executed safe optimal trajectories for
navigated point agents among various obstacle shapes (ellipse, cardioid, diamond, square, and sphere) via discrete barrier states and differential dynamic programming.
Moreover,~\cite{mestres2024distributed} established CBFs for multi-agent systems based on circular agent and obstacle geometries, whereas~\cite{landi2019safetybarrier} controlled manipulators to interact with humans whose geometry was described using capsules (cylinders with two hemispheres).
For general agent and obstacle geometries,~\cite{singletary2022manipulation} used the signed distance function as CBF.
In unknown environments, several works discussed safe navigation with CBFs based on perception.
These include
signed distance approximations using support vector machines~\cite{srinivasan2020synthesis} and neural networks~\cite{long2021learning};
sensor-based robot navigation in human crowds using model predictive control with CBFs~\cite{vulcano2022};
vision-based CBFs for vehicles~\cite{abdi2023safecontrol} and
LiDAR-based CBFs for mobile robots~\cite{keyumarsi2024lidar};
safe vision-based control with CBFs over point clouds~\cite{desa2024pointcloud, dai2024sailing} and Gaussian splatting maps~\cite{chen2024safer};
and neural navigation CBFs~\cite{harms2024neural}.

In this paper, we focus on scenarios where the geometries of both the navigating agent and its environment are described by polytopes.
Related works on CBFs for polytope geometries include~\cite{long2024safestabilizing} where polygon robots were navigated in dynamic elliptic environments by computing the polygon-ellipse distance with an analytic approach.
Moreover,~\cite{thirugnanam2022safety} addressed obstacle avoidance between polytopes via
% planning and control via
discrete-time CBF constraints in model predictive control, and
\cite{thirugnanam2022duality} extended this work to continuous time using a duality-based safety-critical optimal control approach with nonsmooth CBFs.
Finally,~\cite{tayal2024polygonal} established polygonal cone CBFs for navigation in cluttered environments by drawing collision cones based on the vertices of polygon obstacles.
% These works mostly focus on 2D scenarios or in 3D they consider agents as points with a buffer for their size.

A challenge in safe navigation is that infinitely many points (the points of the agent) must be kept inside a safe set (the collision-free space of the environment).
Meanwhile, classical CBF formulations keep a single point inside a set.
If the agent and environment geometries are (approximated by) polytopes, however, the navigation problem reduces to a combination of finitely many safety constraints which can be encoded into CBFs.
Motivated by this, here we propose an algorithmic method to construct a smooth CBF candidate for polytope agents navigating in known polytope environments.
The uniqueness of this method is that the resulting CBF candidate is given in a simple closed form, which allows the use of analytic expressions for safe controllers even in complex real-world settings, facilitating real-time implementation and offering simplicity over existing optimization-based methods.
As demonstrated by simulation examples, the polytopes may not be convex,
% they may have arbitrary dimension, meaning that
and the method is applicable to both 2D and 3D navigation, including dynamically changing environments.
This method opens the way to address arbitrary agent and environment geometries via polytope approximations.

\section{BACKGROUND}
\label{sec:CBF}

Control barrier functions provide a constructive method for designing safe controllers for systems of the form:
\begin{equation} \label{eq:system}
    \dot{\bx} = \bff(\bx) + \bg(\bx) \bu,
\end{equation}
where ${\bx \in \X}$ is the state, ${\bu \in \U}$ is the input, and the locally Lipschitz functions ${\bff : \X \to \R^n}$ and ${\bg: \X \to \R^{n \times m}}$ describe the dynamics.
Using a locally Lipschitz state feedback controller ${\bk : \X \to \U}$, ${\bu = \bk(\bx)}$, we intend to ensure that the solution of the closed-loop system:
\begin{equation} \label{eq:closedloop}
    \dot{\bx} = \bff(\bx) + \bg(\bx) \bk(\bx)
\end{equation}
evolves inside a safe set ${\C \subset \X}$.
We call the system~\eqref{eq:closedloop} safe w.r.t.~$\C$ if for any initial condition ${\bx(0) = \bx_{0} \in \C}$ the unique solution of~\eqref{eq:closedloop} satisfies ${\bx(t) \in \C}$ for all time.

Let safety be described by a continuously differentiable function ${h : \X \to \R}$, along with the safe set $\C$ defined as:
\begin{equation} \label{eq:safeset}
    \C = \{\bx \in \X: h(\bx) \geq 0 \}.
    % \\
    % \partial \C & = \{\bx \in \X: h(\bx) = 0 \}, \\
    % {\rm Int} \C & = \{\bx \in \X: h(\bx) > 0 \}.
\end{equation}
That is, the state $\bx$ is safe if ${h(\bx) \geq 0}$.
If $h$ has the properties of a CBF, safe controllers can be synthesized for~\eqref{eq:system}.

\begin{definition}[\cite{AmesXuGriTab2017}]
Function $h$ is a {\em control barrier function} for~\eqref{eq:system} on $\C$ if there exists\footnote{Function ${\alpha : (-b,a) \to \R}$, ${a,b>0}$ is of extended class-$\K$ (${\alpha \in \Ke}$) if it is continuous, strictly increasing, and ${\alpha(0)=0}$.}
% Function ${\alpha : \R \to \R}$ is of extended class-$\Kinf$ (${\alpha \in \Keinf}$) if ${\alpha \in \Ke}$ and ${\lim_{r \to \pm \infty} \alpha(r) = \pm \infty}$.}
${\alpha \in \Ke}$ such that for all ${\bx \in \C}$:
\begin{equation} \label{eq:CBF_condition}
    \sup_{\bu \in \U} \dot{h}(\bx,\bu) > - \alpha \big( h(\bx) \big),
\end{equation}
where ${\dot{h}(\bx,\bu) = \derp{h}{\bx}(\bx) \cdot \big( \bff(\bx) + \bg(\bx) \bu \big)}$.
% where $\dot{h}$ denotes the derivative of $h$ along~\eqref{eq:system}:
% \begin{equation} \label{eq:hdot}
%     \dot{h}(\bx,\bu) =
%     \derp{h}{\bx}(\bx) \cdot \big( \bff(\bx) + \bg(\bx) \bu \big).
% \end{equation}
\end{definition}

\begin{theorem}[\cite{AmesXuGriTab2017}] \label{theo:CBF}
\textit{
If $h$ is a CBF for~\eqref{eq:system} on $\C$, then any locally Lipschitz controller ${\bk : \X \to \U}$ that satisfies: 
\begin{equation} \label{eq:safety_condition}
    \dot{h} \big( \bx, \bk(\bx) \big) \geq - \alpha \big( h(\bx) \big)
\end{equation}
for all ${\bx \in \C}$ renders~\eqref{eq:closedloop} safe w.r.t.~$\C$.
% Furthermore, if~\eqref{eq:safety_condition} holds for all ${\bx \in \X}$, then $\C$ is asymptotically stable.
}
\end{theorem}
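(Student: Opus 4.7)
The plan is to reduce the safety property to a standard comparison argument along trajectories. First I would note that, since $\bff$, $\bg$, and $\bk$ are locally Lipschitz, the closed-loop vector field ${\bff(\bx) + \bg(\bx)\bk(\bx)}$ is locally Lipschitz, so for any ${\bx_0 \in \C}$ the solution $\bx(t)$ of~\eqref{eq:closedloop} exists and is unique on some maximal interval of existence. The goal is then to show $h(\bx(t)) \geq 0$ on this interval, which in turn will preclude finite escape from $\C$.

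Second, I would track the scalar quantity ${v(t) := h(\bx(t))}$. Because $h$ is continuously differentiable and $\bx(t)$ is $C^1$, $v$ is $C^1$ with derivative ${\dot v(t) = \derp{h}{\bx}(\bx(t))\cdot\bigl(\bff(\bx(t))+\bg(\bx(t))\bk(\bx(t))\bigr) = \dot h\bigl(\bx(t),\bk(\bx(t))\bigr)}$. The hypothesis~\eqref{eq:safety_condition} then gives the scalar differential inequality ${\dot v(t) \geq -\alpha(v(t))}$ as long as ${\bx(t) \in \C}$, i.e.\ ${v(t) \geq 0}$.

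Third, I would invoke a comparison lemma against the scalar ODE ${\dot y = -\alpha(y)}$ with initial condition ${y(0) = v(0) = h(\bx_0) \geq 0}$. Since $\alpha \in \Ke$ is strictly increasing with ${\alpha(0)=0}$, the point $y=0$ is an equilibrium, and any nonnegative initial condition yields a solution that remains nonnegative for all time (a trajectory starting above zero cannot cross zero, since at any would-be crossing time one has $\dot y = -\alpha(0) = 0$, and $y\equiv 0$ is itself a solution from ${y(0)=0}$). Combining this with the comparison lemma gives ${v(t) \geq y(t) \geq 0}$, hence ${\bx(t) \in \C}$, for all $t$ in the interval of existence. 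Finally, forward invariance of the (closed) set $\C$ together with the closed-loop dynamics rules out finite escape from $\C$, so the solution extends to all ${t \geq 0}$.

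The main subtlety, and the step I would be most careful with, is the comparison argument when $\alpha$ is only assumed continuous and strictly increasing (not necessarily locally Lipschitz): uniqueness of solutions to ${\dot y = -\alpha(y)}$ through ${y=0}$ is not automatic. I would handle this by a direct contradiction argument rather than a black-box comparison lemma: suppose for contradiction there is a first time $t_1 > 0$ with ${v(t_1) = 0}$ and ${v(t) < 0}$ on some ${(t_1, t_1+\varepsilon)}$; then at $t_1$ we would need ${\dot v(t_1^+) \leq 0}$, whereas~\eqref{eq:safety_condition} yields ${\dot v(t_1) \geq -\alpha(0) = 0}$, giving ${\dot v(t_1) = 0}$ and forcing one to iterate the argument with a sharper perturbation estimate using the fact that $\alpha$ is strictly increasing near zero. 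This refinement, together with continuity of $v$, is what pins down ${v(t) \geq 0}$ without requiring Lipschitzness of $\alpha$.
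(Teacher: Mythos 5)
First, a framing point: the paper does not actually prove Theorem~\ref{theo:CBF} --- it is imported from \cite{AmesXuGriTab2017} without proof --- so your attempt can only be measured against the standard argument in that reference, which is indeed the comparison-lemma route you sketch. Your skeleton (local Lipschitzness gives a unique closed-loop solution, reduce to the scalar inequality $\dot v \geq -\alpha(v)$ along the trajectory, compare with $\dot y = -\alpha(y)$) is the right one, and you correctly sense that the last step is delicate. However, the difficulty is not where you place it, and your proposed patch does not close it. The non-Lipschitzness of $\alpha$ is a red herring: \emph{every} solution of $\dot y = -\alpha(y)$ with $y(0) \geq 0$ stays nonnegative, minimal or not, because on any interval where $y<0$ one has $\dot y = -\alpha(y) > 0$, and the mean value theorem then contradicts $y$ attaining a negative value after touching zero. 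The genuine gap is that \eqref{eq:safety_condition} is hypothesized only for $\bx \in \C$, so the differential inequality $\dot v(t) \geq -\alpha(v(t))$ is available only while $v(t)\geq 0$. At a putative first crossing time $t_1$ you correctly deduce $\dot v(t_1)=0$, but that is perfectly consistent with $v$ behaving like $-(t-t_1)^2$ immediately afterwards, and for $t>t_1$ with $v(t)<0$ the state has left $\C$ and the hypothesis supplies no inequality for your ``sharper perturbation estimate'' to feed on. As written, the final step is an acknowledged but unfilled hole.

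The missing ingredient is a boundary regularity fact that your argument never invokes. Because the inequality \eqref{eq:CBF_condition} in the CBF definition is \emph{strict}, at any $\bx \in \partial\C$ (where $h(\bx)=0$) one cannot have $\derp{h}{\bx}(\bx)=0$: that would force $\sup_{\bu}\dot h(\bx,\bu)=0 \not> -\alpha(0)=0$. Hence the gradient of $h$ is nonzero on $\partial\C$, the set $\C$ is locally a regular sublevel set there, and the inequality $\dot h(\bx,\bk(\bx)) \geq -\alpha(0) = 0$ at boundary points becomes a genuine sub-tangentiality condition on the closed-loop vector field; Nagumo's (Brezis's) theorem for locally Lipschitz fields then yields forward invariance of $\C$. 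Alternatively, \cite{AmesXuGriTab2017} avoids the issue by requiring the CBF inequality on an open set $D \supset \C$, so that the differential inequality persists for slightly negative $v$ and your comparison argument applies verbatim on $D$. One of these two repairs is needed. A secondary, smaller slip: forward invariance of a closed set does not by itself exclude finite escape time unless $\C$ is compact or a growth condition is imposed; the safety definition in the paper implicitly handles this by speaking of the (maximal) solution, but your final sentence asserts more than the invariance argument delivers.
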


Safe controllers can be synthesized by enforcing~\eqref{eq:safety_condition} during control design.
For example,~\eqref{eq:safety_condition} can be used as constraint in optimization to minimally modify an existing desired controller ${\bk_{\rm d} : \X \to \U}$ and generate a safe controller:
\begin{equation} \label{eq:QP}
\begin{aligned}
    \bk(\bx) = \underset{\bu \in \U}{\operatorname{argmin}} & \quad \| \bu - \bk_{\rm d}(\bx) \|^2 \\[-3pt]
    \text{s.t.} & \quad \dot{h}(\bx,\bu) \geq - \alpha \big( h(\bx) \big).
\end{aligned}
\end{equation}
This optimization problem can be solved in closed form~\cite{cohen2023smooth}.
% which can be written equivalently in explicit form~\cite{cohen2023smooth}:
% \begin{equation} \label{eq:safetyfilter}
%     \bk(\bx) = \bk_{\rm d}(\bx) + \Lambda \big( a(\bx), \| \bb(\bx) \| \big) \bb(\bx)^\top,
% \end{equation}
% with:
% \begin{align}
%     & a(\bx) = \dot{h} \big( \bx, \bk_{\rm d}(\bx) \big) + \alpha \big( h(\bx) \big), \quad
%     \bb(\bx) = \derp{h}{\bx}(\bx) \cdot \bg(\bx), \nonumber \\
%     & \Lambda(a,b) =
%     \begin{cases}
%         0 & {\rm if}\ b = 0, \\
%         \frac{1}{b} \max \Big\{ 0, - \frac{a}{b} \Big\} & {\rm if}\ b \neq 0.
%     \end{cases}
% \end{align}

\begin{remark} \label{rem:timedependency}
If safety constraints depend on time, the CBF ${h : \X \times \R \to \R}$ and controller ${\bk : \X \times \R \to \U}$ also need to be time dependent, whereas~\eqref{eq:safety_condition} needs to be modified to:
\begin{equation} \label{eq:safety_condition_time_varying}
    \dot{h} \big( \bx, t, \bk(\bx,t) \big) \geq - \alpha \big( h(\bx,t) \big),
\end{equation}
where ${\dot{h}(\bx,t,\bu) = \frac{\partial h}{\partial t}(\bx,t) + \derp{h}{\bx}(\bx,t) \cdot \big( \bff(\bx) + \bg(\bx) \bu \big)}$.
% where the derivative of $h$ changes from~\eqref{eq:hdot} to:
% \begin{equation} \label{eq:hdot_time_varying}
%     \dot{h}(\bx,t,\bu) =
%     \frac{\partial h}{\partial t}(\bx,t) + \derp{h}{\bx}(\bx,t) \cdot \big( \bff(\bx) + \bg(\bx) \bu \big).
% \end{equation}
\end{remark}

This paper addresses control design for polytope safe sets.
\begin{definition}
Consider the set $\C$ in~\eqref{eq:safeset}.
$\C$ is called a {\em half space} if $h$ is affine in $\bx$.
$\C$ is called a {\em convex polytope} if it is the intersection of finitely many half spaces.
$\C$ is called a {\em polytope} if it is the union of finitely many convex polytopes. 
\end{definition}

This necessitates set compositions (unions and intersections).
Consider $N$ sets denoted by $\C_{i}$ with functions $h_{i}$ and index ${i \in I = \{1, 2, \ldots, N\}}$.
The union and intersection are given by the $\max$ and $\min$ functions, respectively,~\cite{glotfelter2017nonsmooth, molnar2023composing}:
% on the individual barriers $h_{i}$:
\begin{align}
    {\textstyle \bigcup_{i \in I}} \C_{i} = \Big\{ \bx \in \X: \max_{i \in I} h_{i}(\bx) \geq 0 \Big\},
    \label{eq:maxcbf} \\
    {\textstyle \bigcap_{i \in I}} \C_{i} = \Big\{ \bx \in \X: \min_{i \in I} h_{i}(\bx) \geq 0 \Big\}.
    \label{eq:mincbf}
\end{align}
% These compositions will be relevant below, where we obtain convex polytopes as the intersections of half spaces and general polytopes as the unions of convex polytopes.

Via smooth approximations of the $\max$ and $\min$ functions, a continuously differentiable CBF candidate can be obtained such that
${h(\bx) \approx \max_{i \in I} h_{i}(\bx)}$ and
${h(\bx) \approx \min_{i \in I} h_{i}(\bx)}$.
For example, the log-sum-exp formulas from~\cite{molnar2023composing, lindemann2019stl}:
\begin{equation} \label{eq:logsumexp}
    \max_{i \in I} a_{i} \approx \frac{1}{\kappa} \ln \!\bigg(\! \sum_{i \in I} {\rm e}^{\kappa a_{i}} \!\bigg), \quad
    \min_{i \in I} b_{i} \approx - \frac{1}{\kappa} \ln \!\bigg(\! \sum_{i \in I} {\rm e}^{-\kappa b_{i}} \!\bigg),
\end{equation}
offer one possible option for smoothing, for any ${a_{i}, b_{i} \in \R}$, ${i \in I}$.
These formulas over-approximate the $\max$ and under-approximate the $\min$ function, and ${\kappa>0}$ is a smoothing parameter that determines the approximation error so that ${\kappa \to \infty}$ recovers the $\max$ and $\min$.

Finally, in case of bounded polytopes we rely on vertex representations, and we will use the notion of a convex hull that is the smallest convex polytope containing $\C$.
\begin{definition} \label{def:hull}
Let $\C$ be a bounded polytope with $N_{\rm v}$ vertices denoted by $\bx_{k} \in \C$, ${k \in K = \{1, 2, \ldots, N_{\rm v}}\}$.
The {\em convex hull} of the vertices is defined by:
\begin{equation}
    \H = \bigg\{ \bx \in \X : \bx = \sum_{k \in K} \lambda_{k} \bx_{k} \bigg\},
\end{equation}
where ${\lambda_{k} \in [0,1]}$ for all ${k \in K}$ and ${\sum_{k \in K} \lambda_{k} = 1}$.
\end{definition}
% Note that if $\C$ is a convex polytope, the convex hull is $\C$ itself.
% If $\C$ is a polytope but not a convex one, then the convex hull is the smallest convex polytope that contains $\C$.

%%%%%%%%%%%%%%%%%%%%%%%%%%%%%%%%%%%%%%%%%%%%%%%%%%%%%%%%%%%%%%%%%%%%%%%%%%%%%%%%
\section{NAVIGATION IN POLYTOPE ENVIRONMENTS}

In this paper, we use CBF theory for safety-critical navigation and achieve collision-free motions for a controlled agent in an environment.
The geometries of both the agent and its environment are described as polytopes.
We propose a method to construct CBF candidates for polytope geometries.

We formulate CBF candidates in terms of the agent's position ${\bp \in \R^p}$, with ${p=2}$ for 2D and ${p=3}$ for 3D navigation problems.
For example, for single integrator dynamics:
\begin{equation} \label{eq:single_integrator}
    \dot{\bx} = \bu,
\end{equation}
the state $\bx$ matches the position $\bp$ (i.e., ${\bx = \bp}$), and the proposed CBF candidate can be directly used to synthesize a safe velocity input ${\bu = \bk(\bx)}$.
In more complex models, the position $\bp$ is typically a member of the state $\bx$.
% which may contain other variables as well (such as orientation, velocity, angular velocity).

\begin{remark} \label{rem:robotic}
To focus on the geometry of the problem, we use the single integrator in our examples, while our framework also applies to robotic systems.
Tracking the safe velocity of the single integrator by low-level controllers can guarantee safe obstacle avoidance on various robotic systems, including manipulators, legged, wheeled, and flying robots~\cite{singletary2022manipulation, molnar2022modelfree, cohen2024reduced}.
Furthermore, in second-order robotic systems:
\begin{equation}
\begin{aligned}
\dot{\bq} & = \bv, \\
\dot{\bv} & = \bD(\bq)^{-1} \big( -\bC(\bq,\bv) \bv - \bG(\bq) + \bB \bu \big),
\end{aligned}
\end{equation}
a CBF $h$ for the single integrator can be extended to a valid CBF $H$ for the second-order system using, e.g., backstepping
as ${H(\bx) = h(\bq) - \|\bv - \bk(\bq)\|^2/(2\mu)}$ with the safe velocity ${\bk(\bq)}$ of the single integrator and $\mu>0$~\cite{cohen2024reduced, taylor2022safebackstepping, cohen2024constructive}.
\end{remark}

\subsection{Navigation of Point Agent}

We begin with the safety-critical navigation of an agent modeled as a point.
We describe the agent's safety based on its position $\bp$ and the geometry of the environment. First, we discuss the simplest environment: a half space bounded by a single wall.
Then, we generalize to convex polytope environments as intersections of half spaces.
Finally, we address general polytope environments as unions of convex polytopes.
Polytopes in the position space are formulated as:
\begin{equation} \label{eq:polytope}
    \P = \{ \bp \in \R^p: \psi(\bp) \geq 0\},
\end{equation}
where use the notation ${\psi : \R^p \to \R}$ for constraint functions that are not necessarily CBFs.
The underlying CBF candidates defined over the state space will be denoted as $h$.

\subsubsection{Half Space Environment}

A safe half space bounded by a single barrier (i.e., a wall) can be described by:
\begin{equation} \label{eq:barrier_point_wall}
    \psi(\bp) = \bn \cdot (\bp - \bw),
\end{equation}
where ${\bn \in \R^p}$, ${\bn \neq \bzero}$, is the normal vector and ${\bw \in \R^p}$ is the location of the barrier (wall), whereas ${\psi(\bp) \geq 0}$ indicates that the agent is on the safe side of the barrier.
For the single integrator in~\eqref{eq:single_integrator}, this directly yields the CBF ${h(\bx)=\psi(\bp)}$.
% Note that the environment can be in 2D space (${\bp,\bn,\bw \in \R^2}$) or in 3D space (${\bp,\bn,\bw \in \R^3}$).

% \begin{equation}
%     \frac{\partial \psi}{\partial \bp}(\bp) = \bn,
% \end{equation}

% \begin{remark} Moving (i.e., translating and rotating) walls can also be captured by the time-dependent barrier:
% \begin{equation}
%     \psi(\bp,t) = \bn(t) \cdot \big( \bp - \bw(t) \big).
% \end{equation}
% % e.g. rotating wall:
% % \begin{equation}
% %     \bn(t) = \bR(\varphi(t)) \bn_{\rm w}, \quad
% %     \bR(\varphi(t)) =
% %     \begin{bmatrix}
% %     \cos (\varphi(t)) \!&\! -\sin (\varphi(t)) \\
% %     \sin (\varphi(t)) \!&\! \cos (\varphi(t))
% %     \end{bmatrix},
% % \end{equation}
% % which has the derivatives:
% % \begin{equation}
% % \begin{aligned}
% %     \frac{\partial \psi}{\partial \bp}(\bp,t) & = \bn(t), \\
% %     \frac{\partial \psi}{\partial t}(\bp,t) & = \frac{\partial \bn}{\partial t}(t) \cdot \big( \bp - \bw(t) \big) - \bn(t) \cdot \frac{\partial \bw}{\partial t}(t).
% % \end{aligned}
% % \end{equation}
% For simplicity, we will discuss the time-independent case and drop $t$.
% Time dependency, however, can be added throughout the paper by taking into account ${\partial \psi / \partial t}$; see Remark~\ref{rem:timedependency}.
% \end{remark}

\subsubsection{Convex Polytope Environment}

Let the environment be a convex polytope that is the intersection of $N_{\rm w}$ half spaces, with index ${i \in I = \{1, 2, \ldots, N_{\rm w}\}}$,
% Each half space has a
normal vector ${\bn_{i} \in \R^p}$, ${\bn_{i} \neq \bzero}$, location ${\bw_{i} \in \R^p}$, and barrier:
\begin{equation} \label{eq:barrier_point_wall_multiple}
    \psi_{i}(\bp) = \bn_{i} \cdot (\bp - \bw_{i}).
\end{equation}
To describe the intersection, the individual barriers can be combined into a CBF candidate according to~\eqref{eq:mincbf}:
\begin{equation} \label{eq:barrier_point_convex_polytope}
    \psi(\bp) = \min_{i \in I} \psi_{i}(\bp).
\end{equation}
The agent does not collide with the environment if ${\psi(\bp) \geq 0}$.

Due to the $\min$ function, $\psi$ may not be differentiable, and hence ${h(\bx)=\psi(\bp)}$ is not a CBF.
To obtain a continuously differentiable CBF candidate, a smooth approximation of the $\min$ function can be used, for example, based on~\eqref{eq:logsumexp}:
\begin{equation} \label{eq:smoothing_point_convex_polytope}
    h(\bx) = -\frac{1}{\kappa} \ln \bigg( \sum_{i \in I} {\rm e}^{-\kappa \psi_{i}(\bp)} \bigg).
\end{equation}
\begin{remark} \label{rem:smoothing}
While nonsmooth barrier function theory~\cite{glotfelter2017nonsmooth} could enable the direct use of~\eqref{eq:barrier_point_convex_polytope}, a smooth $h$ in~\eqref{eq:smoothing_point_convex_polytope} facilitates the tracking of the safe velocity and the CBF extension for robotic systems~\cite{cohen2024reduced} mentioned in Remark~\ref{rem:robotic}.
The smooth approximation makes the safe region slightly smaller by smoothing the corners of the environment, with a trade-off between smoothness and conservativeness.
Increasing $\kappa$ can reduce the approximation error and conservativeness without affecting computation times, while it leads to a more rapidly changing gradient for $h$~\cite{molnar2023composing}.
The nonsmooth case with sharp corners and discontinuous gradient is recovered for ${\kappa \to \infty}$.
\end{remark}

\subsubsection{General Polytope Environment}
Let the environment be a general polytope that is the union of $N_{\rm p}$ convex polytopes with index ${j \in J = \{1, 2, \ldots, N_{\rm p}\}}$, constructed from a total of $N_{\rm w}$ half spaces with index ${i \in I}$.
Let ${I_{j} \subseteq I}$ indicate those half spaces that form convex polytope $j$ (i.e., half space $i$ belongs to convex polytope $j$ if ${i \in I_{j}}$).
The union of the convex polytopes can be described by the following composition of the individual barriers based on~\eqref{eq:maxcbf} and~\eqref{eq:barrier_point_convex_polytope}:
\begin{equation} \label{eq:barrier_point_general_polytope}
    \psi(\bp) = \max_{j \in J} \min_{i \in I_{j}} \psi_{i}(\bp).
\end{equation}

This can be smoothly approximated, for example, via~\eqref{eq:logsumexp}:
\begin{equation} \label{eq:smoothing_point_general_polytope}
    h(\bx) = \frac{1}{\kappa} \ln \bigg( \sum_{j \in J} \Big( \sum_{i \in I_{j}} {\rm e}^{-\kappa \psi_{i}(\bp)} \Big)^{-1} \bigg) - \frac{b}{\kappa}.
\end{equation}
Here ${b>0}$ is a buffer parameter which can guarantee safety even with approximation errors, i.e., ensure that $h(\bx)$ is an under-approximation (${\psi(\bp) \geq h(\bx)}$) so that ${\bx \in \C}$ implies ${\bp \in P}$.
Note that $b$ is needed because the $\max$ function is over-approximated by~\eqref{eq:logsumexp} (while it was not needed in~\eqref{eq:smoothing_point_convex_polytope} as the $\min$ is under-approximated).
One may choose $b$ and $\kappa$, e.g., based on the approximation error bound in~\cite[Thm.~5]{molnar2023composing}.

\begin{remark} Our framework can also describe dynamically changing polytope environments via time-dependent barriers:
\begin{equation}
    \psi_{i}(\bp,t) = \bn_{i}(t) \cdot \big( \bp - \bw_{i}(t) \big),
\end{equation}
% e.g. rotating wall:
% \begin{equation}
%     \bn(t) = \bR(\varphi(t)) \bn_{\rm w}, \quad
%     \bR(\varphi(t)) =
%     \begin{bmatrix}
%     \cos (\varphi(t)) \!&\! -\sin (\varphi(t)) \\
%     \sin (\varphi(t)) \!&\! \cos (\varphi(t))
%     \end{bmatrix},
% \end{equation}
% which has the derivatives:
% \begin{equation}
% \begin{aligned}
%     \frac{\partial \psi}{\partial \bp}(\bp,t) & = \bn(t), \\
%     \frac{\partial \psi}{\partial t}(\bp,t) & = \frac{\partial \bn}{\partial t}(t) \cdot \big( \bp - \bw(t) \big) - \bn(t) \cdot \frac{\partial \bw}{\partial t}(t).
% \end{aligned}
% \end{equation}
with translation and rotation for each boundary, 
as long as their composition~\eqref{eq:barrier_point_general_polytope} (i.e., the topology given by $I_{j}$, $J$) is time-independent.
Time-dependency can be added throughout the paper by accounting for ${\partial \psi / \partial t}$ as in Remark~\ref{rem:timedependency}.
\end{remark}

\begin{figure}
\centering
\includegraphics[scale=1]{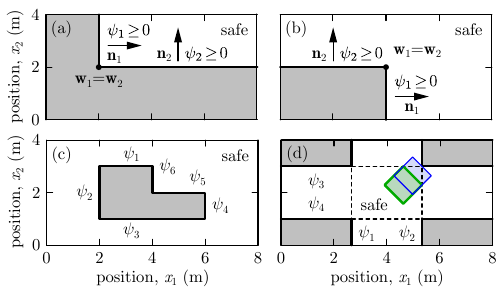}
\vspace{-3mm}
\caption{
Examples of polytope environments and the corresponding barriers: (a) convex corner, (b) concave corner, (c) L-shaped obstacle, (d) crossroad.
}
\vspace{-5mm}
\label{fig:sets}
\end{figure}

\begin{example} \label{ex:sets}
We illustrate the barriers of polytope environments via toy examples.
We drop the arguments of $\psi$ and $\psi_{i}$ for brevity.
The convex corner in Fig.~\ref{fig:sets}(a) is captured by:
\begin{equation}
    \psi = \min \{ \psi_{1}, \psi_{2} \},
\end{equation}
% cf.~\eqref{eq:barrier_point_general_polytope}
with
% ${J = \{1\}}$,
% ${I_{1} = I = \{1,2\}}$,
${\bn_{1} = \begin{bmatrix} 1 \!&\! 0 \end{bmatrix}^\top}$,
${\bn_{2} = \begin{bmatrix} 0 \!&\! 1 \end{bmatrix}^\top}$, and
${\bw_{1} = \bw_{2} = \begin{bmatrix} 2 \!&\! 2 \end{bmatrix}^\top}$.
Similarly, the concave corner in Fig.~\ref{fig:sets}(b) is given by:
\begin{equation}
    \psi = \max \{ \psi_{1}, \psi_{2} \}.
\end{equation}
% with
% ${J = \{1,2\}}$,
% ${I_{1} = \{1\}}$,
% ${I_{2} = \{2\}}$, and
% ${I = \{1,2\}}$.
% \begin{equation}
%     h = \max \{ \psi_{1}, \psi_{2}, \psi_{3}, \psi_{4}\},
% \end{equation}
% \begin{equation}
%     \!\!\!\!\!\!
%     h = \min \!\big\{\!
%     \max \{ \psi_{1}, \psi_{2}, \psi_{3}, \psi_{4} \},
%     \max \{ \psi_{5}, \psi_{6}, \psi_{7}, \psi_{8} \}
%     \big\},
% \end{equation}

The L-shaped object in Fig.~\ref{fig:sets}(c) consists of convex corners (related to $\min$) and concave corners (associated with $\max$):
\begin{equation} \label{eq:barrier_L_shape}
    \psi = \max \big\{ \psi_{1}, \psi_{2}, \psi_{3}, \psi_{4}, \min\{ \psi_{5}, \psi_{6} \} \big\},
\end{equation}
which is constructed according to~\eqref{eq:barrier_point_general_polytope} from ${N_{\rm w}=6}$ barriers and ${N_{\rm p}=5}$ convex polytopes with boundaries
% ${J = \{1,2,3,4,5\}}$,
${I_{1} = \{1\}}$,
${I_{2} = \{2\}}$,
${I_{3} = \{3\}}$,
${I_{4} = \{4\}}$, and
${I_{5} = \{5,6\}}$.
% ${I = \{1,2,3,4,5,6\}}$.

The crossroad in Fig.~\ref{fig:sets}(d) is the union of two roads, each of which is a convex polytope with two boundaries:
\begin{equation}
    \psi = \max \!\big\{\!
    \min \{ \psi_{1}, \psi_{2} \},
    \min \{ \psi_{3}, \psi_{4} \}
    \big\},
\end{equation}
cf.~\eqref{eq:barrier_point_general_polytope} with
${N_{\rm w}=4}$,
${N_{\rm p}=2}$,
${I_{1} = \{1,2\}}$, and
${I_{2} = \{3,4\}}$.
\end{example}

\begin{example} \label{ex:L_shape}
We simulate an obstacle avoidance task where a point agent, governed by~\eqref{eq:single_integrator}, aims to reach a goal point while avoiding the L-shaped obstacle from Fig.~\ref{fig:sets}(c).
The agent uses the controller~\eqref{eq:QP} that modifies a desired velocity $\bk_{\rm d}(\bx)$ to a safe velocity ${\bu = \bk(\bx)}$ using the CBF~\eqref{eq:smoothing_point_general_polytope} that approximates~\eqref{eq:barrier_L_shape}, with
${\kappa = 5}$,
${b = 0.7}$, and
${\alpha(h) = 2 h}$.
To reach the goal at point $\bp_{\rm g}$, the desired controller is:
\begin{equation}
\bk_{\rm d}(\bx)={\rm sat} \big( K_{\rm p} (\bp_{\rm g} - \bp) \big),
\end{equation}
with gain ${K_{\rm p} = 1}$ and saturation at ${u_{\max} = 1}$ (${{\rm sat}(\bu)=\bu}$ if ${\|\bu\| \leq u_{\max}}$ and ${{\rm sat}(\bu)=\frac{\bu}{\|\bu\|} u_{\rm max}}$ if ${\|\bu\| > u_{\max}}$).

The simulation results\footnote{Matlab codes for each simulation example are available at: https://github.com/molnartamasg/CBFs-for-polytope-navigation.} are shown in Fig.~\ref{fig:L_shape}.
The thick trajectory in panel (a) highlights that the agent successfully reaches the goal without collision.
Safety is maintained throughout the motion as indicated by the nonnegative value of the CBF in panel (b).
This is achieved by modifying the desired inputs (thin) to safe inputs (thick) in panel (c).
Moreover, the thin trajectories in panel (a) show that collision free-motions are also achieved for other starting positions.
\end{example}

\begin{figure}
\centering
\includegraphics[scale=1]{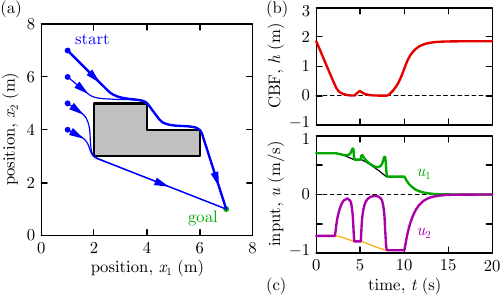}
\vspace{-3mm}
\caption{
Safety-critical navigation of a point agent around a polytope obstacle.
Safety is maintained using the proposed CBF candidate~\eqref{eq:smoothing_point_general_polytope}.
}
\vspace{-5mm}
\label{fig:L_shape}
\end{figure}

\subsection{Navigation of Polytope Agent}

Next we describe the safety-critical navigation of an agent modeled as a rigid body.
The agent's geometry is assumed to be a bounded polytope with $N_{\rm v}$ vertices indexed by ${k \in K = \{1, 2, \ldots, N_{\rm v}\}}$.
The positions of the vertices are:
\begin{equation}
    \bp_{k} = \bp + \Delta \bp_{k},
\end{equation}
with the relative position $\Delta \bp_{k}$ from the agent's center at $\bp$.

\begin{remark} \label{rem:rotations}
If the agent moves without rotation, which is the case for the single integrator~\eqref{eq:single_integrator}, then $\Delta \bp_{k}$ is constant parameter.
If the agent rotates, then $\Delta \bp_{k}$ becomes state-dependent as it depends on the agent's orientation which needs to be represented in the state $\bx$ (e.g. via Euler angles).
% With smoothing, like in~\eqref{eq:smoothing_point_convex_polytope} and~\eqref{eq:barrier_point_general_polytope}, the differentiability of the resulting CBF candidate w.r.t.~the agent's orientation can be ensured.
\end{remark}

% \begin{remark}
% If the agent is rotating in 2D plane, $\bp_{k}$ also depends on the orientation angle $\chi$:
% \begin{equation}
%     \bp_{k} = \bp + \bR(\chi) \Delta \bp_{k}, \quad
%     \bR(\chi) =
%     \begin{bmatrix}
%     \cos \chi & -\sin \chi \\
%     \sin \chi & \cos \chi
%     \end{bmatrix}.
% \end{equation}
% If the agent rotates in 3D space, $\bp_{k}$ depends on the Euler angles (roll, pitch, yaw) $\phi$, $\theta$, $\chi$:
% \begin{align}
%     & \bp_{k} = \bp + \bR(\phi,\theta,\chi) \Delta \bp_{k}, \\
%     & \bR(\phi,\theta,\chi) =
%     % \begin{bmatrix}
%     %     \cos \chi \!&\! - \sin \chi \!&\! 0 \\
%     %     \sin \chi \!&\! \cos \chi \!&\! 0 \\
%     %     0 \!&\! 0 \!&\! 1
%     % \end{bmatrix}
%     % \begin{bmatrix}
%     %     \cos \theta \!&\! 0 \!&\! \sin \theta \\
%     %     0 \!&\! 1 \!&\! 0 \\
%     %     - \sin \theta \!&\! 0 \!&\! \cos \theta
%     % \end{bmatrix},
%     \begin{bmatrix}
%         {\rm c}_{\chi} \!&\! - {\rm s}_{\chi} \!&\! 0 \\
%         {\rm s}_{\chi} \!&\! {\rm c}_{\chi} \!&\! 0 \\
%         0 \!&\! 0 \!&\! 1
%     \end{bmatrix} \!\!
%     \begin{bmatrix}
%         {\rm c}_{\theta} \!&\! 0 \!&\! {\rm s}_{\theta} \\
%         0 \!&\! 1 \!&\! 0 \\
%         - {\rm s}_{\theta} \!&\! 0 \!&\! {\rm c}_{\theta}
%     \end{bmatrix} \!\!
%     \begin{bmatrix}
%         1 \!&\! 0 \!&\! 0 \\
%         0 \!&\! {\rm c}_{\phi} \!&\! - {\rm s}_{\phi} \\
%         0 \!&\! {\rm s}_{\phi} \!&\! {\rm c}_{\phi}
%     \end{bmatrix}\!, \nonumber
% \end{align}
% where ${\rm c}_{(.)}$ abbreviates $\cos(.)$ and ${\rm s}_{(.)}$ abbreviates $\sin(.)$.
% \end{remark}

We describe the safety of such polytope agents in half space, convex polytope, and general polytope environments.

\subsubsection{Half Space Environment}
Consider the single half space (wall) from~\eqref{eq:barrier_point_wall}.
The polytope agent is safe if all its vertices are in this half space, i.e.,  ${\bp_{k} \in \P}$ for all ${k \in K}$ with $\P$ given by~\eqref{eq:polytope} and~\eqref{eq:barrier_point_wall}.
Using the $\min$ function, we can construct a function $\phi$ so that ${\phi(\bx) \geq 0}$ indicates safety:
\begin{equation} \label{eq:barrier_body_wall}
    \phi(\bx) = \min_{k \in K} \psi(\bp_{k}).
\end{equation}
% We use the notation $\phi$ rather than $h$ to emphasize that $\phi$ may not be a CBF due to non-differentiability.
Note that $\phi$ may depend on the state $\bx$ rather than just the position $\bp$ because $\bp_{k}$ may be orientation-dependent if the agent rotates; cf.~Remark~\ref{rem:rotations}.
Equation~\eqref{eq:barrier_body_wall} can be interpreted as the agent's center must be located in a convex polytope whose half spaces are obtained by shifting the wall by $\Delta \bp_{k}$.
A smooth counterpart of~\eqref{eq:barrier_body_wall} can be obtained similar to~\eqref{eq:smoothing_point_convex_polytope}.

\subsubsection{Convex Polytope Environment}
If the environment is given by a convex polytope, as it was described in~\eqref{eq:barrier_point_convex_polytope}, the agent is safe if all its vertices are in this polytope.
Similar to~\eqref{eq:barrier_body_wall}, this can be expressed using the $\min$ function:
\begin{equation} \label{eq:barrier_body_convex_polytope}
    \phi(\bx) = \min_{i \in I} \min_{k \in K} \psi_{i}(\bp_{k}),
\end{equation}
where we note that the order of the two $\min$ operators is interchangeable.
A smooth counterpart is:
\begin{equation} \label{eq:smoothing_body_convex_polytope}
    h(\bx) = -\frac{1}{\kappa} \ln \bigg( \sum_{i \in I} \sum_{k \in K} {\rm e}^{-\kappa \psi_{i}(\bp_{k})} \bigg).
    % - \frac{b}{\kappa},
\end{equation}
% where even ${b=0}$ may guarantee safety since this is an under-approximation of~\eqref{eq:barrier_body_convex_polytope}.

\subsubsection{General Polytope Environment}
If the environment is a general (not necessarily convex) polytope, safety could be captured by a complicated signed distance expression.
This expression can be simplified significantly by the following under-approximation.
We use a sufficient condition for safety: the agent is located in the polytope environment if all its vertices are in one of the convex polytopes making up the environment.
Mathematically, this is expressed by:
\begin{equation} \label{eq:barrier_body_general_polytope}
    \phi(\bx) = \max_{j \in J} \min_{i \in I_{j}} \min_{k \in K} \psi_{i}(\bp_{k}),
\end{equation}
% cf.~\eqref{eq:barrier_point_general_polytope},
where the order of the $\max$ and $\min$ operators is important and not interchangeable.
This formula simplifies to~\eqref{eq:barrier_point_general_polytope} if the agent is a point (${K=\{1\}}$) and it simplifies to~\eqref{eq:barrier_body_convex_polytope} if the environment is a convex polytope (${J=\{1\}}$).

This under-approximation of the signed distance is illustrated in Fig.~\ref{fig:sets}(d) for a diamond-shaped agent.
The thin blue lines show a position with zero signed distance.
The agent touches the corner here.
The thick green lines highlight a position with ${\phi(\bx)=0}$.
The agent touches the boundaries of convex polytopes (roads) but not the corner.
The buffer around the corner illustrates how conservative~\eqref{eq:barrier_body_general_polytope} is.

The fact that ${\phi(\bx) \geq 0}$ provides collision-free motion is stated formally by the following theorem.
\begin{theorem}
\textit{
Consider a bounded polytope agent with vertices ${\bp_{k} \in \R^p}$ (${k \in K}$) and convex hull $\H$, a polytope environment $\P$ given by~\eqref{eq:polytope} and~\eqref{eq:barrier_point_general_polytope}, and function $\phi$ defined in~\eqref{eq:barrier_body_general_polytope}.
If ${\phi(\bx) \geq 0}$, then ${\H \subseteq \P}$ holds, i.e., the convex hull of the agent is safely contained in the polytope environment.
}
\end{theorem}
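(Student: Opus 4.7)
The plan is to unravel the nested $\max/\min$ in the definition of $\phi$ to isolate a single convex polytope that contains every vertex of the agent, and then use convexity of that polytope to conclude that it contains the whole convex hull $\H$. Since $J$ is finite and the $\max$ is attained, the hypothesis ${\phi(\bx) \geq 0}$ yields an index ${j^{\star} \in J}$ with
\begin{equation}
\min_{i \in I_{j^{\star}}} \min_{k \in K} \psi_{i}(\bp_{k}) \geq 0.
\end{equation}
Unpacking the two inner minima, this is equivalent to ${\psi_{i}(\bp_{k}) \geq 0}$ for every pair ${(i,k) \in I_{j^{\star}} \times K}$.

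Next I would interpret this conclusion geometrically: the set
\begin{equation}
\P_{j^{\star}} = \big\{ \bp \in \R^{p} : \psi_{i}(\bp) \geq 0 \text{ for all } i \in I_{j^{\star}} \big\}
\end{equation}
is, by construction in~\eqref{eq:barrier_point_general_polytope}, one of the convex polytopes whose union forms $\P$, so ${\P_{j^{\star}} \subseteq \P}$. The inequalities from the previous step say precisely that every vertex of the agent satisfies ${\bp_{k} \in \P_{j^{\star}}}$.

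The key step is then to promote containment of the vertices to containment of $\H$. Because each $\psi_{i}$ from~\eqref{eq:barrier_point_wall_multiple} is affine in $\bp$, for any convex combination ${\bp = \sum_{k \in K} \lambda_{k} \bp_{k}}$ with ${\lambda_{k} \geq 0}$ and ${\sum_{k} \lambda_{k} = 1}$, I have
\begin{equation}
\psi_{i}(\bp) = \sum_{k \in K} \lambda_{k} \psi_{i}(\bp_{k}) \geq 0
\end{equation}
for every ${i \in I_{j^{\star}}}$, so ${\bp \in \P_{j^{\star}}}$. By Definition~\ref{def:hull}, this means ${\H \subseteq \P_{j^{\star}} \subseteq \P}$, which is the desired conclusion.

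There is no serious obstacle here; the proof is essentially bookkeeping on quantifiers plus the standard fact that a convex polytope (as a finite intersection of affine half spaces) is closed under convex combinations. The one subtlety worth flagging in the write-up is that the order of $\max$ and $\min$ in~\eqref{eq:barrier_body_general_polytope} matters: the argument crucially picks a single $j^{\star}$ valid for all vertices simultaneously, which is exactly why the agent needing to sit in one convex piece is a sufficient (but generally not necessary) condition for safety, consistent with the conservativeness illustrated in Fig.~\ref{fig:sets}(d).
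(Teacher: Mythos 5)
Your proof is correct, and it rests on the same two ingredients as the paper's: the affinity of each $\psi_{i}$ (so that $\psi_{i}$ of a convex combination equals the convex combination of the $\psi_{i}(\bp_{k})$) and the fact that the $\max$ over $j$ lets you work with a single convex piece. The organization differs slightly, though. The paper proves the quantitative inequality $\psi\big(\sum_{k} \lambda_{k} \bp_{k}\big) \geq \phi(\bx)$ for every point of $\H$ — i.e., $\phi$ is a uniform lower bound on the environment barrier over the whole hull — and then reads off the theorem as the special case $\phi(\bx) \geq 0$; the step from the per-index inequality $\psi_{i}(\sum_{k}\lambda_{k}\bp_{k}) \geq \min_{k}\psi_{i}(\bp_{k})$ to the nested $\max$-$\min$ inequality is left as ``it can be shown.'' You instead use the hypothesis immediately to extract a maximizing witness $j^{\star}$, reduce to the statement that all vertices lie in the convex polytope $\P_{j^{\star}}$, and invoke closedness of $\P_{j^{\star}}$ under convex combinations. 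Your route is a bit more elementary and makes explicit exactly the quantifier bookkeeping the paper elides (including the correct observation that the $\max$/$\min$ order is what allows a single $j^{\star}$ to serve all vertices); the paper's route buys the slightly stronger conclusion that $\phi$ under-approximates $\psi$ on $\H$, which is what justifies calling $\phi$ an under-approximation of the signed distance elsewhere in the text. Both are valid proofs of the stated theorem.
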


\begin{proof}
Based on Definition~\ref{def:hull} and~\eqref{eq:polytope}, ${\H \subseteq \P}$ holds if:
\begin{equation} \label{eq:hull_in_polytope}
    \psi \Big( \sum_{k \in K} \lambda_{k} \bp_{k} \Big)
    \geq 0,
\end{equation}
for all ${\lambda_{k} \in [0,1]}$, ${k \in K}$, such that ${\sum_{k \in K} \lambda_{k} = 1}$.
We prove that~\eqref{eq:hull_in_polytope} holds for ${\phi(\bx) \geq 0}$ by showing that:
\begin{equation} \label{eq:convex_hull_approx}
    \psi \Big( \sum_{k \in K} \lambda_{k} \bp_{k} \Big) \geq \phi(\bx).
\end{equation}
Based on~\eqref{eq:barrier_point_wall_multiple}, ${\sum_{k \in K} \lambda_{k} = 1}$, and ${\lambda_{k} \in [0,1]}$, we have:
\begin{equation}
    \psi_{i} \Big( \sum_{k \in K} \lambda_{k} \bp_{k} \Big)
    = \sum_{k \in K} \lambda_{k} \psi_{i}(\bp_{k})
    \geq \min_{k \in K} \psi_{i}(\bp_{k}),
\end{equation}
for all ${i \in I}$.
% Because ${\sum_{k \in K} \lambda_{k} = 1}$, the following identity holds:
% \begin{equation}
%     \bn_{i} \cdot \Big( \sum_{k \in K} \lambda_{k} \bp_{k} - \bw_{i} \Big)
%     = \sum_{k \in K} \lambda_{k} \bn_{i} \cdot (\bp_{k} - \bw_{i}),
% \end{equation}
% for all ${i \in I}$.
% Based on~\eqref{eq:barrier_point_wall_multiple}, this is equivalent to:
% \begin{equation}
%     \psi_{i} \Big( \sum_{k \in K} \lambda_{k} \bp_{k} \Big)
%     = \sum_{k \in K} \lambda_{k} \psi_{i}(\bp_{k}).
% \end{equation}
% Because ${\sum_{k \in K} \lambda_{k} = 1}$ and ${\lambda_{k} \in [0,1]}$, this leads to:
% \begin{equation}
%     \psi_{i} \Big( \sum_{k \in K} \lambda_{k} \bp_{k} \Big)
%     \geq \min_{k \in K} \psi_{i}(\bp_{k}),
% \end{equation}
% for all ${i \in I}$.
It can be shown that this ultimately yields:
\begin{equation}
    \max_{j \in J} \min_{i \in I_{j}} \psi_{i} \Big( \sum_{k \in K} \lambda_{k} \bp_{k} \Big)
    \geq \max_{j \in J} \min_{i \in I_{j}} \min_{k \in K} \psi_{i}(\bp_{k}),
\end{equation}
which is equivalent to~\eqref{eq:convex_hull_approx} according to~\eqref{eq:barrier_point_general_polytope} and~\eqref{eq:barrier_body_general_polytope}.
\end{proof}

\begin{figure}
\centering
\includegraphics[scale=1]{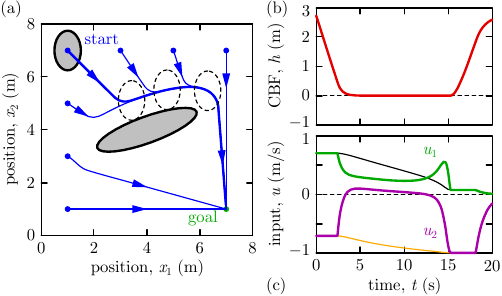}
\vspace{-3mm}
\caption{
Safety-critical navigation of an ellipse agent around an ellipse obstacle.
Safety is maintained by approximating both ellipses as polytopes and using the proposed CBF candidate~\eqref{eq:smoothing_body_general_polytope}.
}
\vspace{-5mm}
\label{fig:ellipse}
\end{figure}

% The function $\phi$ in~\eqref{eq:barrier_body_general_polytope} indicates the agent's safety if ${\phi(\bx) \geq 0}$.
A smooth approximation of the $\max$ and $\min$ functions in~\eqref{eq:barrier_body_general_polytope} yields a continuously differentiable CBF candidate.
For example, the log-sum-exp formula~\eqref{eq:logsumexp} gives:
\begin{equation} \label{eq:smoothing_body_general_polytope}
    h(\bx) = \frac{1}{\kappa} \ln \bigg( \sum_{j \in J} \Big( \sum_{i \in I_{j}} \sum_{k \in K} {\rm e}^{-\kappa \psi_{i}(\bp_{k})} \Big)^{-1} \bigg) - \frac{b}{\kappa}.
\end{equation}
% where $\kappa$ is a smoothing parameter and $b$ is a buffer.
We remark that using ${b=0}$ may be a sufficient buffer for collision-free motion because~\eqref{eq:barrier_body_general_polytope} is an under-approximation of the signed distance which already works as a buffer; cf.~Fig.~\ref{fig:sets}(d).
% around the corners of the environment.
Importantly,~\eqref{eq:smoothing_body_general_polytope} is a simple closed form expression.
Thus, $h$ and its gradient can be calculated in real time via elementary operations, including the evaluation of ${N_{\rm w} \times N_{\rm v}}$ linear functions, their exponentials, sums, and a logarithm.
Such simplicity is especially relevant when extending CBFs to more complex systems; cf.~Remark~\ref{rem:robotic}.

% The corresponding gradient of $h$ is:
% \begin{equation} \label{eq:grad_body_general_polytope}
%     \derp{h}{\bx}(\bx) = \frac{1}{{\rm e}^{\kappa h(\bx) + b}} \sum_{j \in J} \frac{\sum_{i \in I_{j}} \sum_{k \in K} {\rm e}^{-\kappa \psi_{i}(\bp_{k})} \bn_{i} \cdot \derp{\bp_{k}}{\bx}}{\Big( \sum_{i \in I_{j}} \sum_{k \in K} {\rm e}^{-\kappa \psi_{i}(\bp_{k})} \Big)^{2}} .
% \end{equation}
% Equations~\eqref{eq:smoothing_body_general_polytope}-\eqref{eq:grad_body_general_polytope} can be computed efficiently using matrix products with ${\bPsi \in \R^{N_{\rm w} \times N_{\rm v}}}$ and ${\bA \in \R^{N_{\rm w} \times N_{\rm p}}}$ defined by:
% \begin{equation}
%     \Psi_{ik} = {\rm e}^{-\kappa \psi_{i}(\bp_{k})}, \quad
%     A_{ij} =
%     \begin{cases}
%         1 & \text{if } i \in I_{j}, \\
%         0 & \text{if } i \notin I_{j}.
%     \end{cases}
% \end{equation}
% $A_{ij}$ indicates whether barrier $i$ bounds convex polytope $j$
% \begin{equation}
%     h(\bx) = \frac{1}{\kappa} \ln \bigg( \frac{1}{(\bPsi \cdot \bone) \bA} \cdot \bone \bigg) - \frac{b}{\kappa},
% \end{equation}
% \begin{equation}
%     \frac{\partial h}{\partial \bPsi}(\bx) = \frac{1}{{\rm e}^{\kappa h(\bx) + b}} \bigg( \bigg( \frac{1}{(\bPsi \cdot \bone) \bA}\bigg)^2 \bA^\top\bigg) * \bPsi,
% \end{equation}
% clean up formulas

\begin{example} \label{ex:ellipse}
Safe navigation of a polytope agent in a polytope environment is exemplified in Fig.~\ref{fig:ellipse}.
An ellipse agent is controlled to move around and ellipse obstacle, where both ellipses are approximated as polytopes with 32 vertices.
This leads to ${32 \times 32}$ barriers $\psi_{i}(\bp_{k})$ which are combined via~\eqref{eq:smoothing_body_general_polytope} with ${J = K = \{1, 2, \ldots, 32\}}$ and ${I_{j}=\{j\}}$.
The same controller and parameters are used as in Example~\ref{ex:L_shape}, except that now ${b=0}$.
The agent successfully navigates to the goal without collisions, as shown by the simulation results (with the same colors and notations as in Fig.~\ref{fig:L_shape}). 
\end{example}

\begin{figure}
\centering
\includegraphics[scale=1]{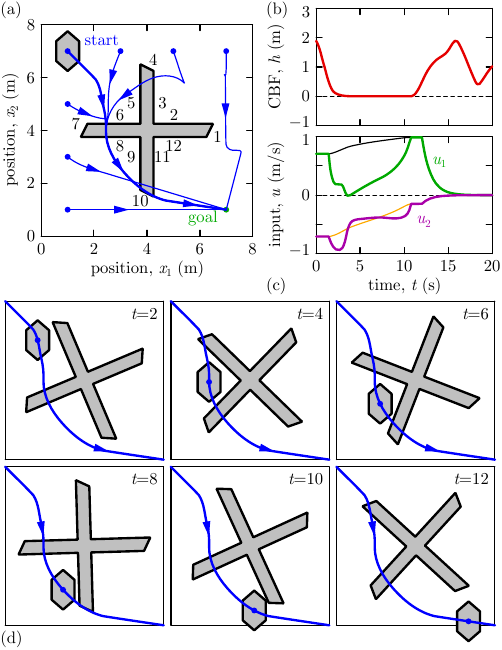}
\vspace{-3mm}
\caption{
Safety-critical navigation of a hexagon agent through a revolving door.
Safety is maintained using a time-varying counterpart of the proposed CBF candidate~\eqref{eq:smoothing_body_general_polytope}.
}
\vspace{-5mm}
\label{fig:revolving_door}
\end{figure}

\begin{example} \label{ex:revolving_door}
Another example of safe navigation is showcased in Fig.~\ref{fig:revolving_door} where a hexagonal agent is moving through a revolving door that has the shape of a non-convex dodecagon.
In this case, ${12 \times 6}$ barriers $\psi_{i}(\bp_{k},t)$ are combined via~\eqref{eq:smoothing_body_general_polytope} with ${K = \{1, 2, \ldots, 6\}}$.
The obstacle-free space is the union of 8 convex polytopes,
${J = \{1, 2, \ldots, 8\}}$,
bounded by barriers
${I_{1}=\{1\}}$, ${I_{2}=\{2,3\}}$,
${I_{3}=\{4\}}$, ${I_{4}=\{5,6\}}$,
${I_{5}=\{7\}}$, ${I_{6}=\{8,9\}}$,
${I_{7}=\{10\}}$, and ${I_{8}=\{11,12\}}$,
as numbered in the figure.
The barriers $\psi_{i}$ and CBF candidate $h$ are time-varying because the obstacle rotates (with 0.2 rad/s), and this is handled according to Remark~\ref{rem:timedependency}.
The controller and parameters match those in Example~\ref{ex:ellipse}.
The snapshots of the simulation results in Fig.~\ref{fig:revolving_door}(d) indicate that the agent manages to navigate through the door while its vertices slide along the obstacle, showing a minimally conservative behavior.

We remark that if the door was stationary the agent would safely stop at the door without reaching the goal.
This deadlock is due to using a single waypoint (the goal point) as motion plan, and it could be overcome via more waypoints.
This highlights that the proposed method is not meant to substitute motion planning but it helps to ensure safety online when following a not necessarily safe nominal motion plan.
\end{example}

\addtolength{\textheight}{-8mm}

\begin{example} \label{ex:pyramid}
Finally, we present an example of safe navigation in 3D space; see Fig.~\ref{fig:pyramid}.
A cube agent is controlled to move around a pyramid obstacle while staying above ground.
This yields ${6 \times 8}$ barriers $\psi_{i}(\bp_{k})$ that are combined via~\eqref{eq:smoothing_body_general_polytope} with ${K = \{1, 2, \ldots, 8\}}$.
The obstacle-free space is the union of 5 convex polytopes, given by
${J = \{1, 2, \ldots, 5\}}$,
${I_{1}=\{1\}}$, ${I_{2}=\{2,6\}}$,
${I_{3}=\{3,6\}}$, ${I_{4}=\{4,6\}}$, and
${I_{5}=\{5,6\}}$; see the indices of barriers (i.e., obstacle faces) in the figure.
The controller and parameters are the same as in Example~\ref{ex:ellipse}.
According to the simulation, the agent navigates without collision.
At the end of the motion, the proposed controller stops the agent above the goal (that is on the ground) to avoid collision between the bottom of the agent and the ground.
% Note that the goal point is located on the ground, so the bottom of the agent would collide with the ground if its center reached the goal.
% Therefore, the proposed controller stops the agent above the goal point to avoid collision.
\end{example}

\begin{figure}
\centering
\includegraphics[scale=1]{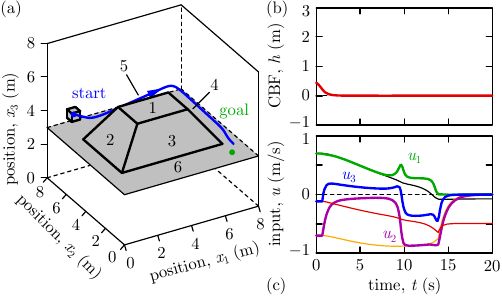}
\vspace{-3mm}
\caption{
Safety-critical navigation of a cube agent around a pyramid obstacle in 3D space.
Safety is maintained using the proposed CBF candidate~\eqref{eq:smoothing_body_general_polytope}.
}
\vspace{-5mm}
\label{fig:pyramid}
\end{figure}

Importantly, the CBF is given by a single closed form expression in each example, leading to a corresponding explicit control law without solving complex optimization problems.
This highlights the simplicity of our approach and enables real-time implementation (cf.~the simulation codes).

%%%%%%%%%%%%%%%%%%%%%%%%%%%%%%%%%%%%%%%%%%%%%%%%%%%%%%%%%%%%%%%%%%%%%%%%%%%%%%%%
\section{CONCLUSION}
\label{sec:concl}

In this paper, we introduced an algorithmic method to generate a control barrier function candidate in closed form for the safety-critical control of polytope-shaped agents navigating in polytope-shaped environments with formal guarantees of collision-free motion.
We established the method via an under-approximation and smoothing of the signed distance.
We showed examples of point and polytope agents navigating in not necessarily convex and potentially dynamically changing polytope environments, both in 2D and 3D.
While this work focused on the geometry of the problem for a single rigid-body agent governed by simplified kinematics, future research may address agents consisting of multiple bodies and extensions to more descriptive dynamical models.

\vspace{6pt}
\noindent \textbf{Acknowledgment.}  
We thank Calla Unruh and Avery Thomas for their numerical case studies in this topic.

% \addtolength{\textheight}{-6cm}

%%%%%%%%%%%%%%%%%%%%%%%%%%%%%%%%%%%%%%%%%%%%%%%%%%%%%%%%%%%%%%%%%%%%%%%%%%%%%%%%
\bibliographystyle{IEEEtran}
\bibliography{ccta_2024}	

\end{document}